\newtheorem{thm}{Theorem}
\newtheorem{prop}{Proposition}
\newtheorem{lem}{Lemma}
\providecommand{\U}[1]{\protect\rule{.1in}{.1in}}
\providecommand{\U}[1]{\protect\rule{.1in}{.1in}}
\providecommand{\U}[1]{\protect\rule{.1in}{.1in}}
\providecommand{\U}[1]{\protect\rule{.1in}{.1in}}
\begin{document}

\title{On Interference Alignment and the Deterministic Capacity for Cellular Channels with Weak Symmetric Cross Links}

\author{
\authorblockN{J\"org B\"uhler}
\authorblockA{Heinrich-Hertz-Lehrstuhl f\"ur Informationstheorie\\ und theoretische Informationstechnik \\
Technische Universit\"at Berlin\\
Einsteinufer 25, D-10587 Berlin, Germany\\
Email: joerg.buehler@mk.tu-berlin.de}
\and
\authorblockN{Gerhard Wunder}
\authorblockA{Fraunhofer Heinrich Hertz Institute \\
Einsteinufer 37, D-10587 Berlin, Germany\\
Email: gerhard.wunder@hhi.fraunhofer.de}
}

\maketitle
\def\thefootnote{\fnsymbol{footnote}}
\footnotetext{This research is supported by Deutsche Forschungsgemeinschaft (DFG) under grant WU 598/1-2.}

\renewcommand{\thefootnote}{\arabic{footnote}} 

\begin{abstract}
In this paper, we study the uplink of a cellular system using the linear deterministic approximation model, where there are two users transmitting to a receiver, mutually interfering with a third transmitter communicating with a second receiver. 
We give an achievable coding scheme and prove its optimality, i.e. characterize the capacity region. This scheme is a form of interference alignment which exploits the channel gain difference of the two-user cell. 
\end{abstract}

\section{Introduction}\label{sec:introduction} 
In recent years, approximate characterizations of capacity regions of multi-user systems have gained more and more attention, with one of the most prominent examples being the characterization of the capacity of the Gaussian interference channel to within one bit/s/Hz in \cite{EtkinTseWangIT08}. One of the tools that arised in the context of capacity approximations and has been shown to be useful in many cases is the {\em linear deterministic model} introduced in \cite{AvestimehrTseAllerton2007} \cite{AvestimehrDiggaviTse_IT2011}. Here, the channel is modelled as a deterministic mapping that operates on bit vectors and mimics the effect the physical channel and interfering signals have on the binary expansion of the transmitted symbols. Basically, the effect of the channel is to erase a certain number of ingoing bits, while superposition of signals is given by the modulo addition. 
Even though this model deemphasizes the effect of thermal noise, it is able to capture some important basic features of wireless systems, namely the superposition and broadcast properties of electromagnetic wave propagation.  Hence, in multi-user systems where interference is one of the most important limiting factors on system performance, the model can also be useful to devise effective coding and interference mitigation techniques. There are many examples where a linear deterministic analysis can be successfully carried over to coding schemes for the physical (Gaussian) models or be used for approximative capacity or degrees of freedom determination. Among them are results concerning the the two-user interference channel \cite{BreslerTseETCOMM08}, the $K$-user interference channel \cite{Cadambe_IT2009}, the $X$-channel \cite{HuangCadambeJafarISIT09}, fading broadcast channels without channel state information at the transmitter \cite{TseYates09}, many-to-one and one-to-many interference channels \cite{Bresler_IT2010}, cooperative interference channels \cite{PrabhakaranViswanath_IT2011Source,PrabhakaranViswanath_IT2011Destination} and cyclically symmetric deterministic interference channels \cite{Bandemer_ISIT2010}. Here, the optimal transmission schemes often involve {\em interference alignment}, where the transmitted signals are designed such that at the receivers, the undesired part of the signal that is due to the different interfering signals aligns in a certain subspace of the receive space.

{\bf Contributions.}
From a practical viewpoint, {\em cellular systems} are of major interest. Generally, a cellular system consists of a set of base stations each communicating with a distinct set of (mobile) users. Effective coding and interference mitigation schemes are still an active area of research. Approximative models such as the linear deterministic approach might help to gain more insight into these problems. 
In this paper, we take a step into this direction and investigate a cellular setup using the linear deterministic model. To our knowledge, this is the first linear deterministic capacity analysis for cellular-type channels. We determine the capacity region and the corresponding optimal coding scheme. This  scheme involves a form of interference alignment, exploiting the channel gain difference of the two-user cell.

{\bf Organization.}
The paper is organized as follows: 
Section \ref{sec:systemmodel} introduces the system model. In section \ref{sec:achievable region}, we give an achievable
rate region and the corresponding communication scheme, which essentially uses the principle of interference alignment. 
In section \ref{sec:capacity}, we prove that this region actually constitutes the capacity region by deriving an outer bound region that coincides with the achievable rate region. 
Finally, section \ref{sec:conclusions} concludes the paper.

{\bf Notation.}
Throughout the paper, $\mathbb{F}_2 = \{0,1\}$ denotes the binary finite field, for which addition is written as $\oplus$, which is addition modulo 2. For two matrices $A \in \mathbb{F}_2^{n_A \times m}$ and $B \in \mathbb{F}_2^{n_B \times m}$, we denote by $[A;B] \in \mathbb{F}_2^{n_A + n_B \times m}$ the matrix that is obtained by stacking $A$ over $B$. Moreover, for a sequence of matrices $A_1, \ldots, A_n$, we write $(A_k)_{k=1}^n$ for the stacked matrix, i.e.
$(A_k)_{k=1}^n = [(A_k)_{k=1}^{n-1}; A_n]$. Similarily, $[A|B]$ stands for placing $A$ next to $B$. The zero matrix of size $m \times n$ and the matrix with all entries equal to one are denoted by $\mathbf{0}_{m\times n}$ and $\mathbf{1}_{m\times n}$, respectively. Furthermore, $\mathbf{1}_m = \mathbf{1}_{m\times 1}$, $\mathbf{0}_m = \mathbf{0}_{m\times 1}$ and $I_{n}$ is the $n \times n$ identity matrix. For a matrix $A \in \mathbb{F}_2^{m \times n}$, we write $|A|$ for the number of ones in $A$. Finally, $\text{div}$ and $\text{mod}$ denote integer division and the modulo operation, respectively, where we use the convention $x \text{~div~} 0 = 0$.

\section{System Model}\label{sec:systemmodel} 

\begin{figure}
\begin{center}
\includegraphics[scale=0.55]{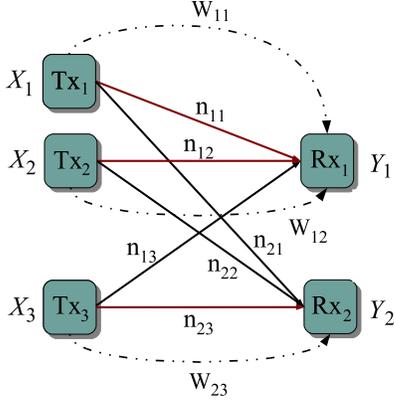}
\caption{System model.}
\label{fig:systemmodel}
\end{center} 
\end{figure} 

The system we consider here represents a basic version of the uplink of cellular system and consists of three transmitters (mobile users) $Tx_{1},Tx_{2}$ and $Tx_{3}$ and two receivers (base stations) $Rx_{1}, Rx_{2}$ (c.f. Figure \ref{fig:systemmodel}). The system is modelled using the {\em linear deterministic model} \cite{AvestimehrTseAllerton2007, AvestimehrDiggaviTse_IT2011}. 
Here, the input symbol at transmitter $Tx_{i}$ is given by a bit vector $X_i \in \mathbb{F}_2^q$ and the output bit vectors $Y_j$ at $Rx_{j}$ are deterministic functions of the inputs: 
Defining the shift matrix $S \in \mathbb{F}_2^{q \times q}$ by
\begin{equation}
S = \begin{pmatrix} 0 & 0 & 0 & \cdots & 0\\ 1 & 0 & 0 & \cdots & 0\\ 0 & 1& 0 & \cdots &0 
\\ \vdots & \ddots& \ddots & \ddots & \vdots \\ 0 & \cdots& 0 &1 & 0 \\ 
\end{pmatrix},
\end{equation}
the input/output equations of the system are given by
\begin{IEEEeqnarray}{rCl}
Y_1 & = & S^{q-n_{11}} X_1 \oplus S^{q-n_{12}}X_2 \oplus S^{q-n_{13}}X_3, \\ \notag
Y_2 & = & S^{q-n_{21}} X_1 \oplus S^{q-n_{22}}X_2 \oplus S^{q-n_{23}}X_3. 
\end{IEEEeqnarray}
Here, $q$ is chosen arbitrarily such that $q \geq \text{max}_{i,j}\{n_{ij}\}$. Note that $n_{ij}$ gives the number of bits that can be passed over the channel between $Tx_{j}$ and $Rx_{i}$, i.e. $n_{ij}$ represent channel gains. There are three messages to be transmitted in the system: $W_{ij}$ denotes the message from transmitter $Tx_{j}$ to the intended receiver $Rx_{i}$. The definitions of (block) codes, error probability, achievable rates and the capacity region are according to the standard information-theoretic definitions. For the remainder of the paper, the transmission rate corresponding to message $W_{11}$ is represented by $R_1$, the rate corresponding to $W_{12}$ by $R_2$ and the rate for $W_{23}$ by $R_3$. 

In the following, we assume without loss of generality that $n_{11} \geq n_{12}$ and write $n_1 = n_{11}, n_2 = n_{12}, n_3 = n_{23}, n_D = n_{13}$ and $\Delta = n_1 - n_2$. In this paper, we further restrict ourselves to the case $n_{21} = n_{22} =: n_{M}$, to which we refer to as {\em symmetric MAC interference} and assume the {\em weak interference condition} $n_D + n_M \leq \text{min}\left\{n_2,n_3\right\}$. Also, we may choose $q = \text{max} \{n_1,n_3\}$. 

\section{An Achievable Region}\label{sec:achievable region} 
In this section, we derive an achievable rate region. For this, we split the system into two subsystems and derive achievable regions for each of them, denoted as $\mathcal{R}^{(1)}_{\text{ach}}$ and $\mathcal{R}^{(2)}_{\text{ach}}$,< respectively. The sum of these two regions then results in an achievable region $\mathcal{R}_{\text{ach}}$ for the overall system. We begin by considering the two subsystems and the corresponding achievable rate regions. The first system is given by the equations
\begin{IEEEeqnarray}{rCl}
Y_1^{(1)} & = & S^{q^{(1)}-n_{1}^{(1)}} X_1^{(1)} \oplus S^{q^{(1)}-n_{1}^{(1)}}X_2^{(1)} \oplus S^{q^{(1)}-n_{D}}X_3^{(1)},\quad\,\,\\ \notag
Y_2^{(1)} & = & S^{q^{(1)}-n_{3}^{(1)}} X_3^{(1)},
\end{IEEEeqnarray}
where $n_{1}^{(1)} = n_2-n_M, n_{3}^{(1)} = n_3 - n_M$ and $q^{(1)} = \text{max}\{n_{1}^{(1)},n_{3}^{(1)}\}$.
The second system is defined by
\begin{IEEEeqnarray}{rCl}
Y_1^{(2)} & = & S^{q^{(2)}-n_{1}^{(2)}} X_1^{(2)} \oplus S^{q^{(2)}-n_{2}^{(2)}}X_2^{(2)}, \\ \notag
Y_2^{(2)} & = & S^{q^{(2)}-n_{M}} X_1^{(2)} \oplus S^{q^{(2)}-n_{M}}X_2^{(2)} \oplus S^{q^{}-n_{M}}X_3^{(2)}, 
\end{IEEEeqnarray}
with $q^{(2)} = n_{1}^{(2)} = n_M + \Delta$ and $n_{2}^{(2)} = n_M$. Note that this split is possible due to the weak interference condition $n_D + n_M \leq \text{min}\left\{n_2,n_3\right\}$. The corresponding transmitters and receiver are denoted as $Tx_1^{(1)}, Rx_1^{(1)}$ etc. 
Figure \ref{fig:systemsplit} illustrates the split of the system. Here, the bars represent the bit vectors as seen at the two receivers;  the zero parts due to the channel shifts are not displayed.

We define $\mathcal{R}^{(1)}_{\text{ach}}$ as the set of points $\left(R^{(1)}_1,R^{(1)}_2, R^{(1)}_3\right)$ satisfying
\begin{IEEEeqnarray}{rCl}
R^{(1)}_3 & \leq & n^{(1)}_3 \\
R^{(1)}_1 + R^{(1)}_2 & \leq & n^{(1)}_1 \\
R^{(1)}_1 + R^{(1)}_2 + R^{(1)}_3 & \leq & n_2 + n_3 -n_D - 2n_M.
\end{IEEEeqnarray}
Similarily, $\mathcal{R}^{(2)}_{\text{ach}}$ is defined by the set of equations
\begin{IEEEeqnarray}{rCl}
R^{(2)}_1 + R^{(2)}_2 & \leq & n_{1}^{(2)}\\
R^{(2)}_1 + R^{(2)}_3  & \leq & n_{1}^{(2)} \\
R^{(2)}_2 + R^{(2)}_3 & \leq &  n_M\\
R^{(2)}_1 + R^{(2)}_2 + R^{(2)}_3& \leq & n_M + \varphi(n_M,\Delta).
\end{IEEEeqnarray}
Here, the function $\varphi$ for $p,q \in \mathbb{N}_{0}$ is defined as
\begin{equation}
\varphi(p,q) = \begin{cases} q + \frac{l(p,q) q}{2}, & \mbox{if} \;\; l(p,q) \;\; \mbox{is even} \\ p - \frac{(l(p,q)-1)q}{2}, & \mbox{if} \;\; l(p,q) \;\; \mbox{is odd}. \end{cases}
\end{equation}
where $l(p,q) = p \text{~div~} q$ (note that $0$ is considered an even number). 

\begin{figure}
\begin{center}
\includegraphics[scale=0.54]{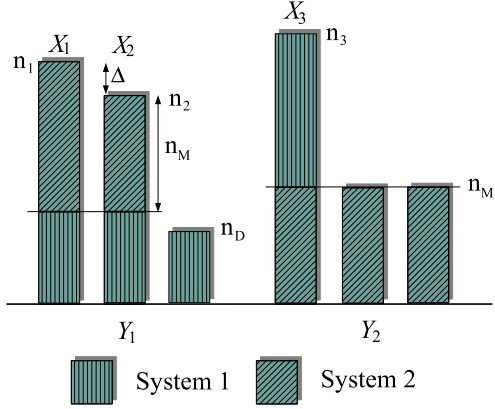}
\caption{Illustration of the split of the system.}
\label{fig:systemsplit}
\end{center} 
\end{figure}

\begin{figure}
\begin{center}
\includegraphics[scale=0.54]{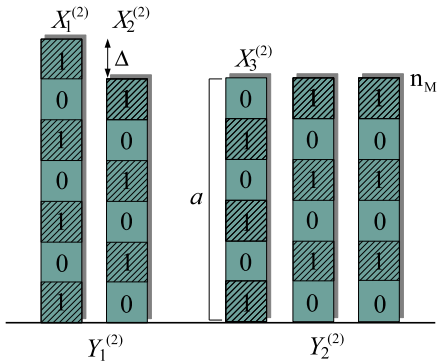}
\caption{Example for optimal bit level assignment: The MAC interference aligns at receiver $Rx_{2}^{(2)}$.}
\label{fig:alignmentexample}
\end{center} 
\end{figure} 

We now show that the rate regions $\mathcal{R}^{(1)}_{\text{ach}}$ and $\mathcal{R}^{(2)}_{\text{ach}}$ are achievable in the respective subsystems. Both regions can be achieved by an orthogonal bit level assignment (and possibly time-sharing). More precisely, for each transmitter, a set of bit levels to be used for data transmission is specified such that at the intended receiver, there is no overlap of these levels with levels used by any other transmitter.

In this way, the achievability of the points in $\mathcal{R}^{(1)}_{\text{ach}}$ follows directly from the results on the interference channel:
If we first consider the one-sided interference channel obtained by removing transmitter $Tx_{2}^{(1)}$, the points $\left(R_{\Sigma}^{(1)},R_3^{(1)}\right)$ with $R_3^{(1)}  \leq  n^{(1)}_3, R_{\Sigma}^{(1)} \leq n^{(1)}_1, R_{\Sigma}^{(1)} + R^{(1)}_3  \leq  n_2 + n_3 -n_D - 2n_M$ can be achieved in this one-sided interference channel using orthogonal coding, which follows from the results in \cite{BreslerTseETCOMM08}. The same rates can be achieved for the interference channel obtained by removing $Tx_{1}^{(1)}$. 
Then, since the signals $X_1^{(1)}$ and $X_2^{(1)}$ are shifted by the same amount at the first receiver, it is clear that in system 1, we can achieve $R_1^{(1)}$ and $R_1^{(2)}$ such that 
$R_1^{(1)} + R_2^{(1)} \leq R_{\Sigma}^{(1)}$, which implies the achievability of region $\mathcal{R}^{(1)}_{\text{ach}}$. 

To show the achievability of $\mathcal{R}^{(2)}_{\text{ach}}$, let $a \in \mathbb{F}_2^{n_M}$ specify the levels used for encoding message $W_{23}$ (note that it suffices to specify the first $n_M$ levels since the others are erased by the channel anyway), where $a_i = 1$ if level $i$ is used and $a_i = 0$ otherwise. Define $\gamma(a) := \mathbf{1}_{n_M} - a$, $\gamma_1(a) := [\gamma(a); \mathbf{1}_{\Delta}]$ and $\gamma_2(a) := [ \mathbf{0}_{\Delta}; \gamma(a)]$. 
Then we can achieve $R_3^{(2)} = |a|$ and all rates $R_1^{(2)}, R_2^{(2)}$ with $R_1^{(2)} \leq |\gamma_1(a)| = n_M + \Delta - |a|, R_2^{(2)} \leq |\gamma_2(a)| = n_M - |a|$ and $R_1^{(2)} + R_2^{(2)} \leq |\gamma_1(a)| + |\gamma_2(a)| - \rho(|a|)$, where
$\rho(x) := \underset{a \in \mathbb{F}_2^{n_M}: |a| = x}{\text{min}}~ \gamma_1(a)^T \gamma_2(a).$
Here, $\gamma_1(a)$ and $[\gamma(a); \mathbf{0}_{\Delta}]$ specify the levels used by transmitters $Tx_1^{(2)}$ and $Tx_2^{(2)}$, respectively, except for the $\rho(|a|)$ positions that cause an overlap of a level in the received bit vectors $\gamma_1(a)$ and $\gamma_2(a)$, for which one of the positions is exclusively assigned to $Tx_1^{(2)}$ or $Tx_2^{(2)}$. 
An assignment vector $a$ solving the minimization for $\rho$ for a given $x$ can be shown to be of the form described in the following. Let $l= n_M \text{~div~} \Delta $ and $Q = n_M \text{~mod~} \Delta$, i.e. $n_M = l\Delta + Q$. We subdivide $a$ into $l \Delta$ subsequences (blocks) of length $\Delta$ and one remainder block of length $Q$. We distribute ones over $a$ until $x$ entries in $a$ have been set to $1$: We start with the even-numbered blocks, followed by the remainder block. If $l$ is even, we finally distribute over the odd-numbered blocks. If $l$ is odd, we also fill the odd-numbered blocks, but in reversed (decreasing) order. To be precise, we define for the case that $l$ is even 
\begin{IEEEeqnarray}{rCl}
A_{\text{even}} &=& \left(\mathbf{0}_{1\times \frac{l}{2}};e_{k}\right)_{k=1}^{l/2}  \otimes I_{\Delta},\\
A_{\text{odd}} &=& \left(e_{k};\mathbf{0}_{1\times \frac{l}{2}}\right)_{k=1}^{l/2} \otimes I_{\Delta}
\end{IEEEeqnarray}
and 
\begin{IEEEeqnarray}{rCl}
A_{\text{even}} &=& \left[\left(\mathbf{0}_{1\times \frac{(l-1)}{2}};e_{k}\right)_{k=1}^{(l-1)/2};\mathbf{0}_{1 \times \frac{l-1}{2}} \right] \otimes I_{\Delta},\\
A_{\text{odd}} &=& M_{l\Delta} \left(\left[\left(e_{k};\mathbf{0}_{1\times \frac{(l+1)}{2}}\right)_{k=1}^{(l-1)/2};e_{\frac{l+1}{2}} \right] \otimes I_{\Delta}\right)
\end{IEEEeqnarray}
for odd $l$. Here, $\otimes$ denotes the Kronecker product, $e_k$ the unit row vector of appropriate size with $1$ at position $k$ and $M_N = (e_{N-k+1})_{k=1}^{N}$ is the flip matrix. Then, defining $P = \left[A_{\text{even}} |\mathbf{0}_{l\Delta\times Q} |A_{\text{odd}}  \right],$
we obtain an optimal assignment $a$ by setting $a = P [\mathbf{1}_{x};\mathbf{0}_{n_M - x}]$. 

We remark that this assignment is not the unique optimal one. Furthermore, it can be interpreted as interference alignment at the receiver $Rx_{2}^{(2)}$: 
The bit levels are chosen such that the interference caused by $X^{(2)}_1$ and $X^{(2)}_2$ (MAC interference) aligns at $Rx_2^{(2)}$ as much as possible in the levels that are unused by $X^{(2)}_3$. Figure \ref{fig:alignmentexample} displays the optimal assignment for the case $l = 6, Q = 0$ and $x = R_3^{(2)} = 3\Delta$. The rates achieved here for the first two transmitters are $R_1^{(2)} = 4 \Delta $ and $R_2^{(2)} = 3 \Delta$. The optimal assignment described above for the case $l$ even results in the following form of the function $\rho$: 
\begin{equation}
\rho(x) = \begin{cases} n_M-2x, & \mbox{if} \;\; 0 \leq x \leq \frac{l\Delta}{2} \\ 
Q + \frac{l \Delta}{2}-x, & \mbox{if} \;\;\frac{l\Delta}{2} \leq x \leq \frac{l\Delta}{2} + Q\\
0, & \mbox{if} \;\; \frac{l\Delta}{2} + Q \leq x \leq n_M.
\end{cases}
\end{equation}
In the case that $l$ is odd, we similarily obtain
\begin{equation}
\rho(x) = \begin{cases} n_M-2x, & \mbox{if} \;\; 0 \leq x \leq Q + \frac{\Delta(l-1)}{2} \\ 
\frac{\Delta(l+1)}{2}-x, & \mbox{if} \;\;Q + \frac{\Delta(l-1)}{2} \leq x \leq \frac{\Delta(l+1)}{2} \\
0, & \mbox{if} \;\; \frac{\Delta(l+1)}{2} \leq x \leq n_M 
\end{cases}
\end{equation}
From this, it is easy to verify that all points in $\mathcal{R}^{(2)}_{\text{ach}}$ are achievable. 

Finally, we obtain an achievable region $\mathcal{R}_{\text{ach}} = \mathcal{R}^{(1)}_{\text{ach}} + \mathcal{R}^{(2)}_{\text{ach}}$ for the overall system. This sum region can be, for example, computed by employing Fourier-Motzkin elimination. The resulting region is stated in the following Proposition:

\begin{prop}
An achievable region $\mathcal{R}_{\text{ach}}$ is given by the set of points $(R_1,R_2,R_3) \in \mathbb{R}^3$ satisfying the constraints
\begin{IEEEeqnarray}{rCl} \label{eq:bound1}
R_1 & \leq & n_1 \\ \label{eq:bound2}
R_2 & \leq & n_2 \\ \label{eq:bound3}
R_3 & \leq & n_3 \\ \label{eq:bound4}
R_1 + R_2 & \leq & n_1 \\ \label{eq:bound5}
R_1 + R_3 & \leq & n_1 + n_3 - n_D - n_M \\ \label{eq:bound6}
R_2 + R_3 & \leq & n_2 + n_3 - n_D - n_M\\ \label{eq:bound7}
R_1 + R_2 + R_3 & \leq & n_2 + n_3 - n_D - n_M + \varphi(n_M,\Delta)~\\ \label{eq:bound8}
R_1 + R_2 + 2R_3 & \leq & n_1 + 2n_3 - n_D - n_M.
\end{IEEEeqnarray}
\end{prop}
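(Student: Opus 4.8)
\begin{IEEEproof}[Proof sketch (proposal)]
The plan is to leverage the achievability of the two subregions, already established above, and to realize $\mathcal{R}_{\text{ach}}$ as their Minkowski sum. Because the split into the two subsystems is enabled by the weak interference condition $n_D + n_M \le \min\{n_2,n_3\}$ and uses \emph{disjoint} sets of bit levels, a transmission scheme for the full channel is obtained by superimposing independent schemes for the two subsystems. Each user's overall rate is then the sum of the rates it attains in the two subsystems, i.e. $R_i = R_i^{(1)} + R_i^{(2)}$ for $i=1,2,3$, so that any point of $\mathcal{R}^{(1)}_{\text{ach}} + \mathcal{R}^{(2)}_{\text{ach}}$ is achievable. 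The remaining task is purely polyhedral: to produce an explicit half-space description of this Minkowski sum.

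First I would assemble the full system of linear constraints in the nine variables $R_1,R_2,R_3$ and $R_i^{(j)}$: the three inequalities defining $\mathcal{R}^{(1)}_{\text{ach}}$, the four defining $\mathcal{R}^{(2)}_{\text{ach}}$, the six nonnegativity constraints $R_i^{(j)} \ge 0$, and the three coupling equations $R_i = R_i^{(1)} + R_i^{(2)}$. I would then eliminate the six subsystem variables $R_1^{(1)},\dots,R_3^{(2)}$ by Fourier--Motzkin elimination, projecting the polytope onto the $(R_1,R_2,R_3)$ coordinates. The value $\varphi(n_M,\Delta)$ is a fixed constant for given $n_M,\Delta$ and appears only on the right-hand side of the sum-rate constraint of $\mathcal{R}^{(2)}_{\text{ach}}$; hence it never interacts with the elimination and simply travels into one output inequality.

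Since every inequality surviving the projection is a nonnegative combination of the input constraints, each of the eight bounds can equivalently be certified by adding a valid inequality of subsystem~$1$ to a valid inequality of subsystem~$2$ and then using $\Delta = n_1 - n_2$ to rewrite the right-hand side. For instance, \eqref{eq:bound4} is the sum of $R_1^{(1)}+R_2^{(1)} \le n_2-n_M$ and $R_1^{(2)}+R_2^{(2)} \le n_M+\Delta$; \eqref{eq:bound7} adds the two sum-rate constraints of the subsystems and absorbs the $\varphi(n_M,\Delta)$ term; and the most involved bound \eqref{eq:bound8} is obtained by combining $R_1^{(1)}+R_2^{(1)}+2R_3^{(1)}$, bounded via the sum-rate and the $R_3^{(1)} \le n_3-n_M$ constraints of subsystem~$1$, with $(R_1^{(2)}+R_3^{(2)})+(R_2^{(2)}+R_3^{(2)})$ in subsystem~$2$. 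The single-rate bounds \eqref{eq:bound1}--\eqref{eq:bound3} arise by discarding the complementary nonnegative rates before summing.

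I expect the main obstacle to be the bookkeeping of the elimination rather than any single estimate: Fourier--Motzkin typically generates a large number of candidate inequalities, and the real work is to \emph{prune} the redundant ones so that exactly the eight listed constraints remain, and dually to confirm \emph{completeness}, i.e. that every triple $(R_1,R_2,R_3)$ satisfying \eqref{eq:bound1}--\eqref{eq:bound8} actually decomposes into feasible subsystem triples. The forward direction (any achievable point satisfies all eight) is immediate from the summation argument above; it is the verification that none of the eight is redundant and that no further inequality is needed which is delicate, and which a carefully executed Fourier--Motzkin elimination is precisely designed to guarantee.
\end{IEEEproof}
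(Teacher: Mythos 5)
Your proposal follows essentially the same route as the paper: the paper likewise obtains $\mathcal{R}_{\text{ach}}$ as the Minkowski sum $\mathcal{R}^{(1)}_{\text{ach}} + \mathcal{R}^{(2)}_{\text{ach}}$ of the two subsystem regions (the superposition being justified by the weak interference condition) and computes the explicit inequality description via Fourier--Motzkin elimination. Your explicit certificates for the individual bounds (e.g.\ summing the two subsystem sum-rate constraints to get \eqref{eq:bound7}, and pairing the subsystem-1 sum rate plus $R_3^{(1)} \leq n_3 - n_M$ with the two cross constraints of subsystem 2 for \eqref{eq:bound8}) are all correct, so the proposal is sound.
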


\section{Capacity Region}\label{sec:capacity} 
In this section, we show that the achievable region $\mathcal{R}_{\text{ach}}$ from the previous section actually constitutes the capacity region $\mathcal{C}$.
For the proof, we need the following Lemma: 
\begin{lem}\label{lem:entropybound}
For two independent random matrices $A,B \in \mathbb{F}_2^{n + \Delta \times m}$ with $m,n, \Delta \in \mathbb{N}$, it holds that
\begin{IEEEeqnarray}{rCl} \label{eq:shiftbound1}
H(A \oplus S^{\Delta} B) - H(S^{\Delta} A \oplus S^{\Delta} B) &\leq& m \varphi(n,\Delta),\\\label{eq:shiftbound2}
H(A \oplus S^{\Delta} B) - 2 H(S^{\Delta} A \oplus S^{\Delta} B) &\leq&  m \Delta.
\end{IEEEeqnarray}
\end{lem}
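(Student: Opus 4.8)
\begin{IEEEproof}
Before touching the entropies, I would first record the arithmetic of $\varphi$, since this is what the bound is really built from. Writing $l = n\ \mathrm{div}\ \Delta$ and $Q = n\ \mathrm{mod}\ \Delta$, a direct inspection of the two cases in the definition gives the base values $\varphi(n,\Delta)=\Delta$ for $0\le n<\Delta$ and $\varphi(n,\Delta)=n$ for $\Delta\le n<2\Delta$, together with the two-step recursion $\varphi(n,\Delta)=\varphi(n-2\Delta,\Delta)+\Delta$ for $n\ge 2\Delta$; equivalently, a single step $n\mapsto n-\Delta$ lowers $\varphi$ alternately by $\Delta-Q$ and by $Q$. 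This alternation is exactly the even/odd block pattern used in the achievability scheme, and it is what the induction below must reproduce.

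Throughout, for $M\in\mathbb{F}_2^{(n+\Delta)\times m}$ I write $M^{\uparrow}$ for the top $n$ rows of $M$, $M^{\downarrow}$ for its rows $\Delta+1,\dots,n+\Delta$, and $M^{\mathrm t},M^{\mathrm b}$ for its top and bottom $\Delta$ rows. Since $S^{\Delta}$ shifts down by $\Delta$ and truncates, one has $H(S^{\Delta}A\oplus S^{\Delta}B)=H(A^{\uparrow}\oplus B^{\uparrow})$, while $A\oplus S^{\Delta}B=[A^{\mathrm t};\,A^{\downarrow}\oplus B^{\uparrow}]$. Two elementary manipulations drive everything. First, peeling the top block, $H(A\oplus S^{\Delta}B)\le H(A^{\mathrm t})+H(A^{\downarrow}\oplus B^{\uparrow})$, which combined with the independence bound $H(A^{\uparrow}\oplus B^{\uparrow})\ge H(A^{\uparrow})\ge H(A^{\mathrm t})$ yields the clean workhorse inequality $H(A\oplus S^{\Delta}B)-H(S^{\Delta}A\oplus S^{\Delta}B)\le H(A^{\downarrow}\oplus B^{\uparrow})$. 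Second, revealing $B^{\uparrow}$ as a genie and using $A\perp B$ gives $H(A\oplus S^{\Delta}B)\le H(B^{\uparrow})+H(A)\le H(A^{\uparrow})+H(B^{\uparrow})+m\Delta$, where the last step uses $H(A)\le H(A^{\uparrow})+H(A^{\mathrm b})\le H(A^{\uparrow})+m\Delta$.

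The genie bound already settles \eqref{eq:shiftbound2}: since $H(S^{\Delta}A\oplus S^{\Delta}B)=H(A^{\uparrow}\oplus B^{\uparrow})\ge\max\{H(A^{\uparrow}),H(B^{\uparrow})\}$ by independence, we get $2H(S^{\Delta}A\oplus S^{\Delta}B)\ge H(A^{\uparrow})+H(B^{\uparrow})$, and subtracting this from the genie bound gives $H(A\oplus S^{\Delta}B)-2H(S^{\Delta}A\oplus S^{\Delta}B)\le m\Delta$. The workhorse inequality in turn settles the base cases $n<2\Delta$ of \eqref{eq:shiftbound1}, because there $A^{\downarrow}\oplus B^{\uparrow}$ has only $n$ rows, so $H(A^{\downarrow}\oplus B^{\uparrow})\le mn\le m\varphi(n,\Delta)$ (with equality when $\Delta\le n<2\Delta$). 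For the remaining inequality \eqref{eq:shiftbound1} I would then argue by induction on $n$: morally the rows below the top carry a copy of the same problem with $n$ reduced, so peeling the top $\Delta$-block from both $A\oplus S^{\Delta}B$ and $S^{\Delta}A\oplus S^{\Delta}B$ and cancelling the common contribution via $A\perp B$ should leave, on the lower rows, a pair of the same shape with parameter reduced by $\Delta$, modulo ``interface'' terms coupling the removed block to the retained ones. Accounting for these interface terms is expected to cost only the alternating increment $Q$ or $\Delta-Q$ from the first paragraph, so that two successive reductions reproduce $\varphi(n,\Delta)=\varphi(n-2\Delta,\Delta)+\Delta$ and close the induction; the already-proven bound \eqref{eq:shiftbound2} is the natural tool for controlling $H(S^{\Delta}A\oplus S^{\Delta}B)$ on the step whose increment is the smaller one.

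The main obstacle is precisely this inductive step for \eqref{eq:shiftbound1}: one must show that advancing from the reduced instance to the full one costs no more than the alternating increment, and the bookkeeping is delicate because (i) the remainder block of size $Q$ breaks the clean $\Delta$-block symmetry and forces the even/odd case split that appears in the definition of $\varphi$, and (ii) the cancellation of the two top blocks must be arranged so that $A\perp B$ is used at exactly the right place and no entropy is double-counted. The naive subadditive bound overshoots by a factor of two here, since the extremal distribution is not the uniform one but the one that \emph{minimises} $H(S^{\Delta}A\oplus S^{\Delta}B)$; obtaining the step constant as the precise increment of $\varphi$, rather than a loose $m\Delta$, is where the real work lies, and the even/odd alternation is the converse mirror of the interference-alignment pattern used for achievability.
\end{IEEEproof}
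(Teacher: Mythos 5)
Your proof of the second bound \eqref{eq:shiftbound2} is correct and is essentially the paper's own argument (subadditivity, $H(A)\le H(A^\uparrow)+m\Delta$, and $2H(X\oplus Y)\ge H(X)+H(Y)$ for independent $X,Y$), and your arithmetic for $\varphi$ (the base values and the recursion $\varphi(n,\Delta)=\varphi(n-2\Delta,\Delta)+\Delta$, with single steps alternating between $\Delta-Q$ and $Q$) is also right. But for the first bound \eqref{eq:shiftbound1} -- which is the actual content of the lemma -- you have not given a proof. Your "induction on $n$" is announced but its inductive step is never executed: the passage is carried entirely by "morally", "should leave", "is expected to cost", and you yourself concede that "obtaining the step constant as the precise increment of $\varphi$ \ldots is where the real work lies." That step \emph{is} the lemma; everything you prove rigorously (the workhorse inequality, the base cases) only recovers bounds of the form $mn$ or $m\Delta$, not the alternating-block constant. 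There is also a small flaw in the base case $0\le n<\Delta$: your workhorse derivation uses $H(A^{\uparrow})\ge H(A^{\mathrm t})$, which fails when $n<\Delta$ since then $A^{\mathrm t}$ has \emph{more} rows than $A^{\uparrow}$; that case needs (and easily admits) a separate two-line argument.

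For comparison, the paper does not induct at all; it closes exactly the gap you left with a one-shot conditioning argument. Write $A=[S;(A_k)_{k=1}^{l+1}]$, $B=[T;(B_k)_{k=1}^{l+1}]$ in $\Delta$-row blocks with a $Q$-row remainder. For $l$ even, lower-bound the subtracted term by conditioning: $H(S^{\Delta}A\oplus S^{\Delta}B)\ge H\bigl[S\oplus T;(A_k\oplus B_k)_{k=1}^{l}\,\big|\,T,(A_{2k-1})_{k},(B_{2k})_{k}\bigr]$, which by independence of $A$ and $B$ collapses to the joint entropy $H\bigl[S;(A_{2k})_{k};(B_{2k-1})_{k}\bigr]$ of the \emph{unconditioned} alternating half of the blocks. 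Then upper-bound $H(A\oplus S^{\Delta}B)$ by $m\Delta(1+l/2)$ plus $H\bigl[S;(A_{2k}\oplus B_{2k-1})_{k}\bigr]$, and the latter is at most $H\bigl[S;(A_{2k})_{k};(B_{2k-1})_{k}\bigr]$, so it cancels against the conditioned term, leaving exactly $m\varphi(n,\Delta)$; the odd-$l$ case conditions on the complementary half and produces the other branch of $\varphi$. This alternating choice of which blocks to reveal is the converse-side mirror of the alignment pattern you correctly intuited, but without carrying out such a cancellation (or your inductive step, which would need a comparable device) the claim \eqref{eq:shiftbound1} remains unproven in your write-up.
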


\begin{proof}
In order to show (\ref{eq:shiftbound1}), we let $l = n ~\text{div}~ \Delta, Q = n~ \text{mod}~ \Delta$ and introduce the following labels for blocks of rows of the matrices $A$ and $B$:
$A = [S;(A_k)_{k=1}^{l+1}], B = [T;(B_k)_{k=1}^{l+1}]$, where $S,T \in \mathbb{F}_2^{Q \times m}$ and $A_k,B_k \in \mathbb{F}_2^{\Delta \times m}$. 
Then the shifted versions of $A$ and $B$ are $S^{\Delta} A = [\mathbf{0}_{\Delta \times m}; S;(A_k)_{k=1}^{l}], S^{\Delta} B = [\mathbf{0}_{\Delta \times m};T;(B_k)_{k=1}^{l}].$
First consider the case that $l$ is even. Then we have
\begin{IEEEeqnarray}{l}
H(A \oplus S^{\Delta} B) - H(S^{\Delta} A \oplus S^{\Delta} B) \\ \notag
\,\, \,\,\,= H\left[S; A_1 \oplus [\mathbf{0}_{\Delta-Q \times m};T];(A_{k+1} \oplus B_k)_{k=1}^{l} \right] \\ \notag
\,\,\,\,\,\,\,\,\,\,\,- H\left[S \oplus T;(A_k \oplus B_k)_{k=1}^{l}\right] \\ \notag
\,\, \,\,\,\leq H\left[S; A_1 \oplus [\mathbf{0}_{\Delta-Q  \times m};T];(A_{k+1} \oplus B_k)_{k=1}^{l}\right] \\ \notag
\,\,\,\,\,\,\,\,\,\,\,- H\left[S \oplus T;(A_k \oplus B_k)_{k=1}^{l}\left|T,(A_{2k-1})_{k=1}^{l/2},(B_{2k})_{k=1}^{l/2}\right.\right]
\\ \notag
\,\, \,\,\,= H\left[S; A_1 \oplus [\mathbf{0}_{\Delta-Q  \times m};T];(A_{k+1} \oplus B_k)_{k=1}^{l}\right] \\ \notag
\,\,\,\,\,\,\,\,\,\,\,- H\left[S; (A_{2k})_{k=1}^{l/2}; (B_{2k-1})_{k=1}^{l/2}\right]
\\ \notag
\,\, \,\,\,\leq Ã‚Â¸m \Delta \left(1 + \frac{l}{2}\right) + H\left[S; \left(A_{2k} \oplus B_{2k-1}\right)_{k=1}^{l/2}\right]\\ \notag
\,\,\,\,\,\,\,\,\,\,\, - H\left[S; (A_{2k})_{k=1}^{l/2}; (B_{2k-1})_{k=1}^{l/2}\right] \\ \notag
\,\, \,\,\,\leq m \Delta \left(1 + \frac{l}{2}\right) = m \varphi(n,\Delta).
\end{IEEEeqnarray}
where the last inequality is due to the independence of $A$ and $B$.

For $l$ odd, we can bound the expression as follows:
\begin{IEEEeqnarray}{l}
H(A \oplus S^{\Delta} B) - H(S^{\Delta} A \oplus S^{\Delta} B) \\ \notag
\,\, \,\,\,\leq H\left[S; A_1 \oplus [\mathbf{0}_{\Delta-Q  \times m};T];(A_{k+1} \oplus B_k)_{k=1}^{l}\right] \\ \notag
\,\,\,\,\,\,\,\,\,\,\,- H\left[S \oplus T;(A_k \oplus B_k)_{k=1}^{l}\left|S,(A_{2k})_{k=1}^{(l-1)/2},\right.\right.\\\notag
\,\,\,\,\,\,\,\,\,\,\,\,\,\,\,\,\,\,\,\,\,\,\,\,\,\,\,\,\,\,\, \left.(B_{2k-1})_{k=1}^{(l+1)/2}\right]\\ \notag
\,\, \,\,\,= H\left[S; A_1 \oplus [\mathbf{0}_{\Delta-Q  \times m};T];(A_{k+1} \oplus B_k)_{k=1}^{l}\right] \\ \notag
\,\,\,\,\,\,\,\,\,\,\,- H\left[T; (A_{2k-1})_{k=1}^{(l+1)/2}; (B_{2k})_{k=1}^{(l-1)/2}\right]
\\ \notag
\,\, \,\,\,\leq Ã‚Â¸m \frac{\Delta(l+1)}{2} + mQ \\\notag
\,\,\,\,\,\,\,\,\,\,\, + H\left[A_1 \oplus [\mathbf{0}_{\Delta-Q  \times m};T];\left(A_{2k+1} \oplus B_{2k} \right)_{k=1}^{(l-1)/2}\right]\\ \notag
\,\,\,\,\,\,\,\,\,\,\, - H\left[T; (A_{2k-1})_{k=1}^{(l+1)/2}; (B_{2k})_{k=1}^{(l-1)/2}\right] \\ \notag
\,\, \,\,\,\leq Ã‚Â¸m  \frac{\Delta(l+1)}{2}+ mQ = m \varphi(n,\Delta).
\end{IEEEeqnarray}

The bound (\ref{eq:shiftbound2}) follows from 
\begin{IEEEeqnarray}{l}
H(A \oplus S^{\Delta} B) - 2 H(S^{\Delta} A \oplus S^{\Delta} B) \\ \notag
\,\, \,\,\,\leq H(A) + H(S^{\Delta} B) - 2 H(S^{\Delta} A \oplus S^{\Delta} B)\\ \notag
\,\, \,\,\,\leq m \Delta +  H(S^{\Delta} A)+ H(S^{\Delta} B) - 2 H(S^{\Delta} A \oplus S^{\Delta} B)\\ \notag
\,\, \,\,\,= m\Delta +  H(S^{\Delta} A \oplus S^{\Delta} B) \\ \notag
\,\,\,\,\,\,\,\,\,\, + H(S^{\Delta} A|S^{\Delta} A \oplus S^{\Delta} B)- 2 H(S^{\Delta} A \oplus S^{\Delta} B)\\ \notag
\,\, \,\,\,= m\Delta  + H(S^{\Delta} A|S^{\Delta} A \oplus S^{\Delta} B)-H(S^{\Delta} A \oplus S^{\Delta} B)  \\ \notag
\,\, \,\,\,= m\Delta  + H(S^{\Delta} A|S^{\Delta} A \oplus S^{\Delta} B) - H(S^{\Delta} A) \\ \notag
\,\,\,\,\,\,\,\,\,\, + H(S^{\Delta} B|S^{\Delta} A \oplus S^{\Delta} B) - H(S^{\Delta} B) \\ \notag
\,\, \,\,\,\leq m\Delta.
\end{IEEEeqnarray}

\end{proof}
We are now ready to prove the main result of the paper:
\vspace{0.2cm}
\begin{thm}
The capacity region $\mathcal{C}$ is given by $\mathcal{R}_{\text{ach}}$.
\end{thm}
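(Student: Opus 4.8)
The plan is to establish that $\mathcal{R}_{\text{ach}} = \mathcal{C}$ by proving matching outer bounds for each of the eight inequalities (\ref{eq:bound1})--(\ref{eq:bound8}); since achievability is already given by the Proposition, it suffices to show that every achievable rate triple $(R_1,R_2,R_3)$ must satisfy these constraints. I would work with a length-$N$ block code of vanishing error probability and use Fano's inequality to convert the rate constraints into entropy bounds, so that each converse inequality reduces to bounding sums of entropies of the received signals $Y_1^N, Y_2^N$ in terms of the channel parameters. Throughout I would treat the transmitted codewords $X_i^N$ as random matrices in $\mathbb{F}_2^{q \times N}$ and exploit that $H(X_i^N) \geq N R_i$ (up to Fano terms $N\epsilon_N$) together with independence of the three messages.

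The easy bounds (\ref{eq:bound1})--(\ref{eq:bound4}) follow from elementary cut-set type arguments: each single-rate bound is immediate since a receiver observes at most $n_{ij}$ nonzero bit levels from a given transmitter, and (\ref{eq:bound4}) follows because $Tx_1$ and $Tx_2$ are seen through a common MAC at $Rx_1$ with the stronger link carrying $n_1$ levels. The bounds (\ref{eq:bound5}) and (\ref{eq:bound6}) are the more interesting pairwise constraints coupling the interfering cell's rate $R_3$ with $R_1$ (resp. $R_2$); here I would give genie-aided arguments in which a genie supplies one receiver with side information about the interfering codeword, so that the weak-interference condition $n_D + n_M \leq \min\{n_2,n_3\}$ lets the combined entropy of the two useful signals be bounded by the claimed $n_1 + n_3 - n_D - n_M$ (resp. $n_2 + n_3 - n_D - n_M$). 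The crucial steps are the two three-rate bounds (\ref{eq:bound7}) and (\ref{eq:bound8}), and this is exactly where Lemma \ref{lem:entropybound} enters: the shift-difference inequalities (\ref{eq:shiftbound1}) and (\ref{eq:shiftbound2}) are tailored to quantify how much entropy the aligned MAC interference can contribute at $Rx_2$ beyond the $n_M$ directly observable levels, yielding the additive $\varphi(n_M,\Delta)$ term in (\ref{eq:bound7}) and the structure of (\ref{eq:bound8}).

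Concretely, for (\ref{eq:bound7}) I would bound $N(R_1+R_2+R_3)$ by $H(Y_1^N) + H(Y_2^N \mid \text{genie})$ and manipulate the expressions so that a term of the form $H(A \oplus S^{\Delta} B) - H(S^{\Delta} A \oplus S^{\Delta} B)$ appears, where $A,B$ are built from the (independent) codewords $X_1^N, X_2^N$ with the column dimension $m = N$; invoking (\ref{eq:shiftbound1}) then produces $N \varphi(n_M,\Delta)$, matching the achievable region. For (\ref{eq:bound8}), which carries the coefficient $2R_3$, I would similarly arrange a term of the shape $H(A \oplus S^{\Delta} B) - 2 H(S^{\Delta} A \oplus S^{\Delta} B)$ and apply (\ref{eq:shiftbound2}) to obtain the $N\Delta$ contribution. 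The main obstacle I anticipate is the bookkeeping in these last two bounds: one must identify the right genie side-information, split the received vectors into the correct row-blocks so that the shifted-entropy differences of the Lemma match exactly the channel's block structure (the blocks of size $\Delta$ and remainder $Q = n_M \bmod \Delta$ used in the Lemma's proof), and verify that the residual entropy terms collapse to the direct-link contributions $n_1, n_2, n_3, n_D$ under the weak-interference assumption. Once the decomposition is aligned with the Lemma, each converse inequality should close tightly against its achievable counterpart, completing the proof that $\mathcal{C} = \mathcal{R}_{\text{ach}}$.
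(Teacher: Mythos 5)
Your proposal follows essentially the same route as the paper: achievability from the Proposition, the bounds (\ref{eq:bound1})--(\ref{eq:bound6}) from standard MAC/interference-channel converses (the paper simply cites the known capacity regions of the embedded sub-channels rather than re-deriving them via cut-set/genie arguments), and the two novel bounds (\ref{eq:bound7})--(\ref{eq:bound8}) via Fano's inequality plus an entropy decomposition engineered so that Lemma \ref{lem:entropybound} applies, with (\ref{eq:shiftbound1}) yielding the $\varphi(n_M,\Delta)$ term and (\ref{eq:shiftbound2}) the bound with coefficient $2R_3$. This matches the paper's proof in both structure and in the assignment of each part of the Lemma to its corresponding bound.
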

\vspace{0.2cm}
\begin{proof}
Consider the interference channel formed by $Tx_{1}, Tx_{3},Rx_{1}$ and $Rx_{2}$ with corresponding capacity region $\mathcal{C_{\text{IC1/3}}}$, the interference channel build from $Tx_{2}, Tx_{3},Rx_{1}$ and $Rx_{2}$ with capacity region $\mathcal{C_{\text{IC2/3}}}$ and the multiple-access channel consisting of $Tx_{1}, Tx_{2}$ and $Rx_{1}$ with capacity region $\mathcal{C}_{\text{MAC}}$. 
Then, it is clear that if $(R_1,R_2,R_3) \in \mathcal{C}$, we must have $(R_1,R_2) \in \mathcal{C}_{\text{MAC}}, (R_1,R_3) \in \mathcal{C_{\text{IC1/3}}}$ and $(R_2,R_3) \in \mathcal{C_{\text{IC2/3}}}$. Evaluating the corresponding capacity regions \cite{BreslerTseETCOMM08}, this implies the bounds (\ref{eq:bound1}) - (\ref{eq:bound6}).

In order to prove (\ref{eq:bound7}), we apply Fano's inequality: For each triple of achievable rates $(R_1,R_2,R_3) \in \mathcal{C}$, Fano's inequality implies that there exists a sequence
$\varepsilon_{N}$ with $\varepsilon_{N} \rightarrow 0$ for $N \rightarrow \infty$ and
a sequence of joint factorized distributions $p(x_1^N)p(x_2^N)p(x_3^N)$ such that for all $N \in \mathbb{N}$
\begin{IEEEeqnarray}{rCl}
R_1 + R_2 &\leq& \frac{1}{N}I(X_1^N,X_2^N;Y_1^N) + \varepsilon_{N},\\
R_3 &\leq& \frac{1}{N}I(X_3^N;Y_2^N) + \varepsilon_{N}.
\end{IEEEeqnarray}
We let $\beta = n_2 - n_D-n_M, \epsilon = n_3 - n_M - n_D$, $T_{k} = [\mathbf{0}_{q-k \times m};\mathbf{1}_{k \times m}]$ and define the following partial matrices of the components of the received signals (note that  $\beta, \epsilon \geq 0$ from the weak interference condition):
\begin{IEEEeqnarray}{lllllll} \notag
\hat{X}_{1} &=& S^{q-n_M}X_{1}^N, X_{1}^{\uparrow} &=& S^{q-n_M-\Delta}X_{1}^N, X_{1}^{\downarrow} &=& T_{n_D+\beta}X_{1}^N,  \\ \notag
\hat{X}_{2} &=& S^{q-n_M}X_{2}^N, X_{2}^{\downarrow} &=& T_{n_D+\beta}X_{2}^N, & &\\ \notag
\hat{X}_{3} &=& S^{q-n_D}X_{3}^N, X_{3}^{\uparrow} &=& S^{q-n_D-\epsilon}X_{3}^N, X_{3}^{\downarrow} &=& T_{n_M}X_{3}^N.
\end{IEEEeqnarray}
Then we have the following chain of inequalities:
\begin{IEEEeqnarray}{l}
(R_1 + R_2 + R_3)N -\varepsilon_{N}N \\ \notag
\,\, \,\,\,\leq I(X_1^N,X_2^N;Y_1^N) + I(X_3^N;Y_2^N) \\ \notag
\,\,\,\,\,= H(Y_1^N) - H(Y_1^N|X_1^N,X_2^N) + H(Y_2^N) - H(Y_2^N|X_3^N)\\ \notag
\,\,\,\,\, = H(Y_1^N) - H(\hat{X}_{3}) + H(Y_2^N) - H(\hat{X}_{1} \oplus \hat{X}_{2})\\ \notag
\,\,\,\,\,\leq H(X_{1}^{\uparrow} \oplus \hat{X}_{2}) + H(X_{1}^{\downarrow} \oplus X_{2}^{\downarrow} \oplus \hat{X}_3) - H(\hat{X}_{3}) \\ \notag 
\,\,\,\,\,\,\,\,\,\,\,+  H(X_{3}^{\uparrow}) + H(X_{3}^{\downarrow} \oplus \hat{X}_{1} \oplus \hat{X}_{2}) - H(\hat{X_1} \oplus \hat{X_2}) \\
\,\,\,\,\, \stackrel{\text{\small(a)}}{\leq}  H(X_{1}^{\uparrow} \oplus \hat{X}_{2}) - H(X_{1}^{\downarrow} \oplus X_{2}^{\downarrow}) \\ \notag
\,\,\,\,\,\,\,\,\,\,\,+ (n_2 + n_3 - n_M - n_D)N\\ \notag
\,\,\,\,\, \stackrel{\text{\small(b)}}{\leq} N\varphi(n_M,\Delta) + (n_2 + n_3 - n_M - n_D)N 
\end{IEEEeqnarray}
where $(a)$ follows from 
\begin{IEEEeqnarray}{rCl}
 H(X_{3}^{\downarrow} \oplus \hat{X}_{1} \oplus \hat{X}_{2}) & \leq & n_M, \\
 H(X_{1}^{\downarrow} \oplus X_{2}^{\downarrow} \oplus \hat{X}_3) & \leq & n_2 - n_M,\\
 H(X_{3}^{\uparrow}) - H(\hat{X}_{3}) & \leq & n_3 - n_M - n_D
\end{IEEEeqnarray}
and $(b)$ is obtained by applying Lemma \ref{lem:entropybound}. A similar argument, using the second part of Lemma \ref{lem:entropybound}, shows the bound (\ref{eq:bound8}). 
\end{proof}

Figure \ref{fig:example} shows the capacity region for the case  $n_1 = 18$,$n_2 = 16$,$n_3 = 14$,$n_M = 6$ and $n_D = 7$.

\begin{figure}
\begin{center}
\includegraphics[scale=0.5]{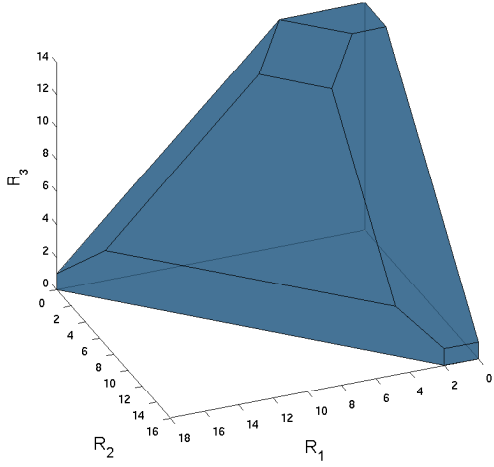}
\caption{Capacity region of the system for $n_1 = 18$,$n_2 = 16$,$n_3 = 14$,$n_M = 6$ and $n_D = 7$.}
\label{fig:example}
\end{center} 
\end{figure} 

\enlargethispage{-0.7cm}

\section{Conclusions}\label{sec:conclusions} 
In this paper, we studied the linear deterministic model for a cellular-type channel where there are two users (in cell 1) transmitting to a receiver (base station 1), mutually interfering with a third transmitter (in cell 2) communicating with a second base station (base station 2). We studied the case of symmetric weak interference where the interference links from cell 1 user to the cell 2 base station are identical and the sum of the interference gains are less or equal then the smallest direct link. We derived the capacity region and the corresponding transmission scheme. 
Even though the system resembles the interference channel, the presence of the second link in cell 1 offers additional potential for aligning the interference caused by the two users at the receiver in the second cell. For this, the transmitted signals in cell 1 are chosen such that the interference at the second receiver aligns on the part of the signal that is unused by the transmitter in cell 2 as much as possible. Although we have considered a restricted setup in this paper, we believe that the achievability and converse arguments used in this paper give valuable insights for the consideration of more general systems.

Future work will study extensions to the more general case of more users and less restricted channel gains. Another interesting direction for further investigations are the connections to the Gaussian equivalent of the channel, specifically concerning approximate capacity characterizations and / or the determination of (generalized) degrees of freedom of the system.

\bibliographystyle{/home/buehler/TeX/BibTex/IEEEtran}


\begin{thebibliography}{10}
\providecommand{\url}[1]{#1}
\csname url@samestyle\endcsname
\providecommand{\newblock}{\relax}
\providecommand{\bibinfo}[2]{#2}
\providecommand{\BIBentrySTDinterwordspacing}{\spaceskip=0pt\relax}
\providecommand{\BIBentryALTinterwordstretchfactor}{4}
\providecommand{\BIBentryALTinterwordspacing}{\spaceskip=\fontdimen2\font plus
\BIBentryALTinterwordstretchfactor\fontdimen3\font minus
  \fontdimen4\font\relax}
\providecommand{\BIBforeignlanguage}[2]{{%
\expandafter\ifx\csname l@#1\endcsname\relax
\typeout{** WARNING: IEEEtran.bst: No hyphenation pattern has been}%
\typeout{** loaded for the language `#1'. Using the pattern for}%
\typeout{** the default language instead.}%
\else
\language=\csname l@#1\endcsname
\fi
#2}}
\providecommand{\BIBdecl}{\relax}
\BIBdecl

\bibitem{EtkinTseWangIT08}
R.~Etkin, D.~Tse, and H.~Wang, ``Gaussian interference channel capacity to
  within one bit,'' \emph{IEEE Trans. Inform. Theory}, vol.~54, no.~12, pp.
  5534--5562, Dec. 2008.

\bibitem{AvestimehrTseAllerton2007}
S.~Avestimehr, S.~Diggavi, and D.~Tse, ``A deterministic approach to wireless
  relay networks,'' in \emph{Proc. Allerton Conference on Communication,
  Control and Computing}, Monticello, IL, USA, 2007.

\bibitem{AvestimehrDiggaviTse_IT2011}
A.~S. Avestimehr, S.~N. Diggavi, and D.~N.~C. Tse, ``Wireless network
  information flow: A deterministic approach,'' \emph{IEEE Trans. Inform.
  Theory}, vol.~57, no.~4, pp. 1872--1905, 2011.

\bibitem{BreslerTseETCOMM08}
G.~Bresler and D.~Tse, ``The two-user gaussian interference channel: a
  deterministic view,'' \emph{European Trans. on Telecommunications}, vol.~8,
  no.~4, pp. 333--354, 2008.

\bibitem{Cadambe_IT2009}
V.~Cadambe, S.~Jafar, and S.~Shamai, ``Interference alignment on the
  deterministic channel and application to fully connected gaussian
  interference networks,'' \emph{IEEE Trans. Inform. Theory}, vol.~55, no.~1,
  pp. 269--274, 2009.

\bibitem{HuangCadambeJafarISIT09}
C.~Huang, V.~Cadambe, and S.~Jafar, ``Interference alignment and the
  generalized degrees of freedom of the x channel,'' in \emph{Proc. IEEE Int.
  Symp. Information Theory (ISIT)}, Seoul, Korea, June/July 2009, pp.
  1929--1933.

\bibitem{TseYates09}
D.~Tse and R.~Yates, ``Fading broadcast channels with state information at the
  receivers,'' \emph{arXiv:0904.3165v1 [cs.IT]}, 2009.

\bibitem{Bresler_IT2010}
G.~Bresler, A.~Parekh, and D.~Tse, ``The approximate capacity of the
  many-to-one and one-to-many gaussian interference channels,'' \emph{IEEE
  Trans. Inform. Theory}, vol.~56, no.~9, pp. 4566--4592, 2010.

\bibitem{PrabhakaranViswanath_IT2011Source}
V.~Prabhakaran and P.~Viswanath, ``Interference channels with source
  cooperation,'' \emph{IEEE Trans. Inform. Theory}, vol.~57, no.~1, pp.
  156--186, 2011.

\bibitem{PrabhakaranViswanath_IT2011Destination}
------, ``Interference channels with destination cooperation,'' \emph{IEEE
  Trans. Inform. Theory}, vol.~57, no.~1, pp. 187--209, 2011.

\bibitem{Bandemer_ISIT2010}
B.~Bandemer, A.~El~Gamal, and G.~Vazquez-Vilar, ``On the sum capacity of a
  class of cyclically symmetric deterministic interference channels,'' in
  \emph{Proc. IEEE Int. Symp. Information Theory (ISIT)}, Seoul, Korea,
  June/July 2009, pp. 2622--2626.

\end{thebibliography}

\end{document}